\documentclass[11pt]{article}

\usepackage{amsmath,amsthm,amsfonts,amssymb}

\usepackage{algorithm,algorithmic}

\usepackage{pgfplots}
\pgfplotsset{compat=1.14}

\newtheorem{lemma}{Lemma}
\newtheorem{proposition}{Proposition}
\newtheorem{corollary}{Corollary}

\usepackage[a4paper,left=2.5cm,right=2.5cm]{geometry}

\begin{document}

\title{An Improved Exact Sampling Algorithm for the Standard Normal Distribution\thanks{This work was supported by National Key R\&D Program of China (2017YFB0802500), the Major Program of Guangdong Basic and Applied Research (2019B030302008), National Natural Science Foundations of China (Grant Nos.\,61672550, 61972431), and the Fundamental Research Funds for the Central Universities (Grant No.\,19lgpy217).}}

\author{
Yusong Du, Baoying Fan, and Baodian Wei\\
\normalsize{School of Data and Computer Science, Sun Yat-sen University, Guangzhou 510006, China}\\
\normalsize{Guangdong Key Laboratory of Information Security Technology, Guangzhou 510006, China}\\
\small{duyusong@mail.sysu.edu.cn}
}
\date{}
\maketitle

\begin{abstract}
In 2016, Karney proposed an exact sampling algorithm for the standard normal distribution. In this paper, we study the computational complexity of this algorithm under the random deviate model. Specifically, Karney's algorithm requires the access to an infinite sequence of independently and uniformly random deviates over the range $(0,1)$. We give an estimate of the expected number of uniform deviates used by this algorithm until outputting a sample value, and present an improved algorithm with lower uniform deviate consumption. The experimental results also shows that our improved algorithm has better performance than Karney's algorithm.
\end{abstract}

\newpage

\section{Introduction}
We denote the set of real numbers by $\mathbb{R}$. The Gaussian function on $\mathbb{R}$ with parameter $\sigma>0$ and $\mu\in\mathbb{R}$ evaluated at $x\in\mathbb{R}$ can be defined by $$\rho_{\sigma,\mu}(x)=\exp{\left(-\frac{(x-\mu)^2}{2\sigma^2}\right)}.$$
Normalizing $\rho_{\sigma,\mu}(x)$ by its total measure $\int_{x\in\mathbb{R}}\rho_{\sigma,\mu}(x)=\sqrt{2\pi}\sigma$ over $\mathbb{R}$, we obtain the probability density function of the (continuous) Gaussian distribution $\mathcal{N}(\mu,\sigma)$, namely the normal distribution of mean $\mu$ and variance $\sigma^2$.

In 2016, Karney \cite{Karney2016} proposed an exact sampling algorithm for the standard normal distribution $\mathcal{N}(0,1)$ of mean $0$ and variance $1$. This algorithm uses rejection sampling \cite{Devroye1986d}, requires no floating-point arithmetic, and generates sample values that conform to $\mathcal{N}(0,1)$ without any statistical discrepancy if we have a source of perfectly uniform deviates over the range $(0,1)$ (essentially, uniformly random bits) at our disposal. Although this algorithm is unlikely to displace existing methods for most applications, it is of theoretical interest as an example of an algorithm in which exact transcendental results can be achieved with simple integer arithmetic. 

Karney's exact sampling algorithm for the standard normal distribution involves exactly sampling from a discrete Gaussian distribution of mean $\mu=0$ and variance $1$ over the set of all the non-negative integers. We denote this distribution by $\mathcal{D}_{\mathbb{Z}^+,1}$, where $\mathbb{Z}^+$ is the set of non-negative integers. This distribution is simply the standard normal distribution restricted so that its support is $\mathbb{Z}^+$. Generally, for all $k\in\mathbb{Z}^+$, a discrete Gaussian distribution of mean $\mu$ and variance $\sigma^2$ over the set of non-negative integers $\mathbb{Z}^+$ can be defined by
$$\mathcal{D}_{\mathbb{Z}^+,\sigma,\mu}(k)={\rho_{\sigma,\mu}(k)}/{\rho_{\sigma,\mu}(\mathbb{Z}^+)},$$
where $\rho_{\sigma,\mu}(\mathbb{Z}^+)=\sum_{x\in\mathbb{Z}^+}\rho_{\sigma,\mu}(x)$. By convention, the subscript $\mu$ is omitted when it is taken to be $0$. Karney's algorithm also involves exactly sampling from the Bernoulli distribution
$$\mathcal{B}_{\exp\left(-\frac12x(2k+x)\right)}$$
with an integer $k\in\mathbb{Z}^+$ and a real number $x\in(0,1)$. This is equivalent to generating a Bernoulli random value (true or false) which is true with probability $\exp\left(-\frac12x(2k+x)\right)$. 

\subsection{Our Contribution}
In this paper, we study the computational complexity of Karney's exact sampling algorithm for the standard normal distribution under the random deviate model. In this model, we have the access to an infinite sequence of independently and uniformly random deviates over the range $(0,1)$, and the complexity is measured by the expected number of uniform deviates used until the algorithm outputting a sample value. Specifically, we give an estimate of the expected number of uniform deviates used by Karney's algorithm, and present an improved algorithm with lower uniform deviate consumption. The experimental results also shows that our improved algorithm has better actual performance than Karney's algorithm.

The improved sampling algorithm comes from our two observations on Karney's sampling algorithm for the standard normal distribution. We find that the sampling method for the discrete Gaussian distribution $\mathcal{D}_{\mathbb{Z}^+,1}(k)={\rho_{1}(k)}/{\rho_{1}(\mathbb{Z}^+)}$ in Karney's algorithm is not the optimal. It uses about $4.8265$ Bernoulli random values from $\mathcal{B}_{1/\sqrt{e}}$ on average (see Proposition \ref{ComplexityStd}), and each Bernoulli random value from $\mathcal{B}_{1/\sqrt{e}}$ requires $\sqrt{e}$ uniform deviates on average. We will give an improved algorithm with lower uniform deviate consumption for sampling $\mathcal{D}_{\mathbb{Z}^+,1}$, which only requires about $3.684$ Bernoulli random values from $\mathcal{B}_{1/\sqrt{e}}$ on average to output an integer from $\mathcal{D}_{\mathbb{Z}^+,1}$ (see Proposition \ref{ComplexityStdNew}). We also find that the suggested way of sampling from the Bernoulli distribution $\mathcal{B}_{\exp\left(-\frac12x(2k+x)\right)}$ in Karney's algorithm consumes about $2.194$ uniform deviates on average. This is a relatively large number. We will present an alternative algorithm so that one may avoid generating the Bernoulli value in the suggested way and reduce the expected number of uniform deviates used to $2.018$. 

\subsection{The Expected Number of Uniform Deviates}
In fact, as a rejection sampling algorithm, its acceptance/rejection rate can be regarded as the computational complexity. In Karney's paper \cite{Karney2016}, it has been indicated that the rejection rate of the sampling algorithm for the standard normal distribution is $(\sqrt{2/\pi})/(1-1/\sqrt{e})\approx2.03$, which means that the algorithm executes about $2.03$ times on average before it outputs a sample value. However, the rejection rate is only a very ``coarse-grained'' complexity model, which could not accurately reflect the actual computational cost. Therefore, in order to analyze the computational complexity of Karney's algorithm more accurately, we consider the expected number of uniform deviates used, namely, the random deviate model, which can not only cover the computational complexity of sampling the proposal distribution $\mathcal{D}_{\mathbb{Z}^+,1}$ but also reflect the computational cost required by the rejection operation. The impact of rejection rate on the complexity of the sampling algorithm can also be fully reflected in the random deviate model.

The computational complexity of von Neumann's exact sampling algorithm for the exponential distribution $e^{-x}$ with $x\geq0$ was given also by estimating the expected number of uniform deviates used \cite{vonNeumann1951}. Karney improved von Neumann’s algorithm by introducing the early rejection step, which reduces the expected number of uniform deviates used from $e/(1-e^{-1})\approx4.30$ to $\sqrt{e}/(1-1/\sqrt{e})\approx4.19$. (see Algorithm E in \cite{Karney2016}.) Our work in this paper is of a similar nature. We study the computational complexity by analyzing and reducing the expected number of uniform deviates used by Karney's exact sampling algorithm for the standard normal distribution, as we note that there is no theoretical estimate of the expected number of uniform deviates for this algorithm.

Another common complexity model for random sampling algorithms is the random bit model, in which the complexity is measured by the expected number of uniformly random bits used by the sampling algorithm \cite{KY76,Flajolet1986,Devroye2017}. Karney also indicated that the expected number of random bits consumed by his algorithm for the standard normal distribution is about $30.0$, but this was only an empirical value obtained by experiment \cite{Karney2016}. A uniformly random deviate is always made up of an indefinite number of uniformly random bits, so the random bit model is a more ``fine-grained'' complexity model as compared to the random deviate model. Nonetheless, we will only use the random deviate model to analyze the the computational complexity in this work, because the random source required by Karney's algorithm is mainly in the form of uniform deviates over the range $(0,1)$. The random deviate model could help us understand the complexity of Karney's algorithm more intuitively. Furthermore, the random deviate model is simpler than the random bit model, and it can make the whole process of complexity analysis more concise.

\section{Karney's Sampling Algorithm}
Algorithm~\ref{KarneyStdN} is Karney's algorithm for sampling exactly from the standard (continuous) normal distribution $\mathcal{N}(0,1)$ of mean $0$ and variance $1$.

\begin{algorithm}
\caption{\cite{Karney2016} Sampling from the standard normal distribution $\mathcal{N}(0,1)$}\label{KarneyStdN}
\begin{algorithmic}[1]
\ENSURE a sample value from $\mathcal{N}(0,1)$
\STATE select $k\in\mathbb{Z}^{+}$ with probability $\exp(-k/2)\cdot(1-\exp(-1/2))$.
\STATE accept $k$ with probability $\exp\left(-\frac{1}{2}k(k-1)\right)$, otherwise \textbf{goto} step 1.
\STATE sample a uniformly random number $x\in(0,1)$.
\STATE accept $x$ with probability $\exp\left(-\frac12x(2k+x)\right)$, otherwise \textbf{goto} step 1.
\STATE set $s\leftarrow \pm1$ with equal probabilities and \textbf{return} $s(k+x)$.
\end{algorithmic}
\end{algorithm}

From the perspective of rejection sampling, in Algorithm~\ref{KarneyStdN}, we can see that step 1 and step 2 form a rejection sampling procedure, which samples $k\in\mathbb{Z}^+$ from the discrete Gaussian distribution $\mathcal{D}_{\mathbb{Z}^+,1}={\rho_{1}(k)}/{\rho_{1}(\mathbb{Z}^+)}$. Step 4 is equivalent to sampling from the Bernoulli distribution $\mathcal{B}_{\exp\left(-\frac12x(2k+x)\right)}$. The correctness can seen from the fact that
$$\rho_{1}(k)\cdot\exp\left(-\frac12x(2k+x)\right)=\exp{\left(-\frac{(k+x)^2}{2}\right)}.$$

In Algorithm~\ref{KarneyStdN}, in order to exactly sample $k\in\mathbb{Z}^{+}$ with relative probability density $\rho_{1}(k)=\exp({-k^2}/{2})$, one needs to exactly sample Bernoulli random values according to $\mathcal{B}_{1/\sqrt{e}}$ by applying Algorithm~\ref{BernoulliExph}. Specifically, one applies Algorithm~\ref{BernoulliExph} repeatedly $(k+1)$ times to select an integer $k\geq0$ with probability $\exp(-k/2)\cdot(1-\exp(-1/2))$ (step 1 in Algorithm~\ref{KarneyStdN}), then continues to apply Algorithm~\ref{BernoulliExph} at most $k(k-1)$ times to accept or reject $k$ with probability $\exp\left(-\frac{1}{2}k(k-1)\right)$ (step 2 in Algorithm~\ref{KarneyStdN}).

\begin{algorithm}
\caption{\cite{Karney2016} Generating a Bernoulli random value which is true with probability $1/\sqrt{e}$}\label{BernoulliExph}
\begin{algorithmic}[1]
\ENSURE a Boolean value according to $\mathcal{B}_{1/\sqrt{e}}$
\STATE sample uniform deviates $u_1, u_2, \ldots$ with $u_i\in(0,1)$ and determine the maximum value $n\geq0$ such that $1/2>u_1>u_2>\ldots>u_n$
\STATE \textbf{return} \textbf{true} if $n$ is even, otherwise \textbf{return} \textbf{false} if $n$ is odd.
\end{algorithmic}
\end{algorithm}

Algorithm~\ref{BernoulliExph} is adapted from Von Neumann's algorithm \cite{vonNeumann1951} for exactly sampling from the exponential distribution $e^{-x}$ for real $x > 0$. More precisely, the probability that the length of the longest decreasing sequence is $n$ is ${x^n}/{n!}-{x^{n+1}}/{(n+1)!}$, and the probability that $n$ is even is exactly equal to
$$(1-x)+\left(\frac{x^2}{2!}-\frac{x^3}{3!}\right)+\ldots+=\sum_{n=0}^{\infty}\left(\frac{x^n}{n!}-\frac{x^{n+1}}{(n+1)!}\right)=e^{-x}.$$

In Algorithm~\ref{KarneyStdN}, step 4 will also be implemented by using a specifically designed algorithm, so that it produces no statistical discrepancy. The main idea is to sample two sets of uniform deviates $u_1, u_2, \ldots$ and $v_1, v_2, \ldots$ from $(0,1)$, and to determine the maximum value $n\geq 0$ such that $x>u_1>u_2>\ldots>u_n$ and $v_i<(2k+x)/(2k+2)$. If $n$ is even, it returns \textbf{true}, and the probability is exactly equal to
$$1-x\left(\frac{2k+x}{2k+2}\right)+\frac{x^2}{2!}\left(\frac{2k+x}{2k+2}\right)^2-\frac{x^3}{3!}\left(\frac{2k+x}{2k+2}\right)^3+\ldots=\exp\left(-x\frac{2k+x}{2k+2}\right).$$
This procedure can be summarized as Algorithm~\ref{BernoulliExpX}.
\begin{algorithm}
\caption{ \cite{Karney2016} Generating a Bernoulli random value which is true with probability $\exp(x\frac{2k+x}{2k+2})$ with an integer $k>0$ and a real number $x\in(0,1)$}
\label{BernoulliExpX}
\begin{algorithmic}[1]
\ENSURE a Boolean value according to $\exp(-x\frac{2k+x}{2k+2})$
\STATE set $y\leftarrow x$, $n\leftarrow 0$.
\STATE sample uniform deviates $z$ with $z\in(0,1)$, and go to step 6 unless $z<y$.
\STATE set $f\leftarrow C(2k+2)$; if $f<0$ go to step 6.
\STATE sample uniform deviates $r\in(0,1)$ if $f=0$, and go to step 6 unless $r<x$.
\STATE set $y\leftarrow z$, $n\leftarrow n+1$; \textbf{goto} step 2.
\STATE \textbf{return} \textbf{true} if $n$ is even, otherwise \textbf{return} \textbf{false} if $n$ is odd.
\end{algorithmic}
\end{algorithm}

In Algorithm~\ref{BernoulliExpX}, $C(m)$ with $m=2k+2$ is a random selector, which returns $-1$, $0$ and $1$ with probability $1/m$, $1/m$ and $1-2/m$ respectively. By applying Algorithm~\ref{BernoulliExpX} at most $k+1$ times, since
$$\exp\left(-\frac12x(2k+x)\right)=\exp\left(-x\frac{2k+x}{2k+2}\right)^{k+1},$$
one can obtain a Bernoulli random value which is true with probability 
$$\exp\left(-\frac12x(2k+x)\right)$$ 
for given $k$ and $x$.

\section{The Computational Complexity of Sampling from $\mathcal{D}_{\mathbb{Z}^+,1}$}
In this section, we analyze the expected number of uniform deviates used by step 1 and step 2 in Algorithm~\ref{KarneyStdN}, which generate a non-negative integer according to $\mathcal{D}_{\mathbb{Z}^+,1}$. Then, we give an improved sampling algorithm with lower uniform deviate consumption.

\subsection{The Complexity of Karney's Algorithm for Sampling from $\mathcal{D}_{\mathbb{Z}^+,1}$}
\begin{proposition}
Algorithm~\ref{BernoulliExph} uses $e^{1/2}$ uniform deviates on average to output a Bernoulli value according to $\mathcal{B}_{1/\sqrt{e}}$.\label{ComplexityBernoulliExph}
\end{proposition}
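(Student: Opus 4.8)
The plan is to track the number of uniform deviates that Algorithm~\ref{BernoulliExph} draws as a (random) function of the chain length $n$, and then to evaluate the expectation of that count by a tail-sum argument.

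First I would pin down the deviate count exactly. To determine the maximal $n$ with $1/2 > u_1 > u_2 > \cdots > u_n$, the algorithm keeps drawing a fresh $u_{i+1}$ as long as the strict-descent condition still holds and stops the first time it fails: either the very first deviate satisfies $u_1 \geq 1/2$, giving $n = 0$ after one draw, or some $u_{i+1} \geq u_i$, giving $n = i$ after $i+1$ draws. The event that the descent never fails has probability $0$, so almost surely the algorithm consumes exactly $n+1$ uniform deviates, and the quantity to estimate is $\mathbb{E}[n+1]$.

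Next I would compute the tail probabilities of $n$. For any integer $m \geq 0$, the event $\{n \geq m\}$ is precisely $\{1/2 > u_1 > u_2 > \cdots > u_m\}$. Since $u_1,\dots,u_m$ are i.i.d.\ uniform on $(0,1)$, the probability that all of them lie below $1/2$ is $(1/2)^m$, and conditioned on that, each of the $m!$ relative orderings is equally likely, so $\Pr[n \geq m] = (1/2)^m/m!$. This is consistent with the per-value probabilities $\Pr[n = m] = (1/2)^m/m! - (1/2)^{m+1}/(m+1)!$ recalled just before Algorithm~\ref{BernoulliExph}, specialised to the threshold $x = 1/2$.

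Finally I would assemble the answer via $\mathbb{E}[n+1] = \sum_{j \geq 1}\Pr[n+1 \geq j] = \sum_{m \geq 0}\Pr[n \geq m] = \sum_{m \geq 0}(1/2)^m/m! = e^{1/2}$, which is the claimed count. I do not expect a genuine obstacle here; the only steps needing care are the bookkeeping in the first paragraph (making sure the final ``failing'' deviate is always drawn and counted, so the count is $n+1$ and not $n$) and substituting the threshold $1/2$ correctly into von Neumann's chain-length distribution rather than leaving a generic $x$.
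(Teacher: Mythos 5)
Your proposal is correct and follows essentially the same route as the paper: both arguments rest on the observation that the algorithm consumes exactly $n+1$ deviates when the decreasing chain has length $n$, together with von Neumann's distribution for that length at threshold $1/2$. The only difference is cosmetic: you evaluate $\mathbb{E}[n+1]$ by summing the tail probabilities $\Pr[n\geq m]=(1/2)^m/m!$, whereas the paper sums $(n+1)\Pr[n=m]$ directly over the probability mass function; both yield $e^{1/2}$.
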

\begin{proof}
Let $x=1/2$. Algorithm~\ref{BernoulliExph} uses a uniform deviate for each comparison, and the probability that the length of the longest decreasing sequence is $n$ is ${x^n}/{n!}-{x^{n+1}}/{(n+1)!}$. Thus, the expected number of uniform deviates used can be given by
$$(1-x)\cdot1+\left(\frac{x^1}{1!}-\frac{x^2}{2!}\right)\cdot2+\left(\frac{x^2}{2!}-\frac{x^3}{3!}\right)\cdot3+\ldots+=\sum_{n=1}^{\infty}\left(\frac{x^{n-1}}{(n-1)!}-\frac{x^n}{n!}\right)\cdot{n}=e^{x}.$$
Algorithm~\ref{BernoulliExph} uses $e^{x}=e^{1/2}$ uniform deviates on average to output a Bernoulli value.
\end{proof}

Let $x$ be an arbitrary number over the range $(0,1)$. Replacing $1/2$ with $x$ in Algorithm~\ref{BernoulliExph}, we can get a Bernoulli random value which is true with probability $e^{-x}$, namely sampling from $\mathcal{B}_{e^{-x}}$ with $x\in(0,1)$.  Specifically, we sample uniform deviates $u_1, u_2, \ldots$ with $u_i\in(0,1)$ and determine the maximum value $n\geq0$ such that $x>u_1>u_2>\ldots>u_n$, then we get \textbf{true} if $n$ is even, or \textbf{false} if $n$ is odd. This can be viewed as a generalized version of Algorithm~\ref{BernoulliExph}, which has been used implicitly in Karney's algorithms. It is clear that Proposition~\ref{ComplexityBernoulliExph} also holds for any $x\in(0,1)$.

\begin{corollary}
Let $x\in(0,1)$. Replacing $1/2$ with $x$ in Algorithm~\ref{BernoulliExph}, we can use $e^x$ uniform deviates on average to obtain a Bernoulli value which is true with probability $e^{-x}$. \label{ComplexityBernoulliExpx}
\end{corollary}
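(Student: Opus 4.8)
The plan is to follow the proof of Proposition~\ref{ComplexityBernoulliExph} essentially line for line, checking only that nothing there relied on the threshold being $1/2$ rather than an arbitrary $x\in(0,1)$. First I would observe that the generalized procedure spends exactly one uniform deviate on each comparison it performs: writing $u_0:=x$, the $i$-th comparison tests whether the fresh deviate $u_i$ satisfies $u_i<u_{i-1}$, and the procedure stops at the first index $i$ with $u_i\geq u_{i-1}$. Hence if the longest strictly decreasing run below $x$ has length $N$ (so $x>u_1>\dots>u_N$ but $u_{N+1}\geq u_N$), the procedure consumes exactly $N+1$ uniform deviates, and the task reduces to computing the expectation of $N+1$. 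Here I would also note that the run terminates with probability $1$, so this expectation equals the convergent series below and is not inflated by an infinite tail.

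Next I would recall the distribution of $N$, exactly as in the discussion preceding Algorithm~\ref{BernoulliExph} and in von Neumann's analysis~\cite{vonNeumann1951}: for independent uniform deviates, $\Pr[x>u_1>\dots>u_n]=x^n/n!$ (conditioning on all $u_i<x$ costs a factor $x^n$, and a prescribed ordering of $n$ exchangeable values costs $1/n!$), whence
$$\Pr[N=n]=\frac{x^n}{n!}-\frac{x^{n+1}}{(n+1)!},\qquad n\geq 0.$$
The only point to verify is that this identity holds for \emph{every} $x\in(0,1)$, which is clear since the combinatorial argument uses nothing beyond $0<x<1$. In particular the correctness part of the corollary — the output is \textbf{true} with probability $\sum_{n\ \mathrm{even}}\big(x^n/n!-x^{n+1}/(n+1)!\big)=e^{-x}$ — is the same telescoping sum already used for Algorithm~\ref{BernoulliExph}.

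Finally I would evaluate the expected number of deviates. Summing $(n+1)\Pr[N=n]$ and reindexing $m=n+1$ gives
$$\sum_{n=0}^{\infty}(n+1)\left(\frac{x^n}{n!}-\frac{x^{n+1}}{(n+1)!}\right)=\sum_{m=1}^{\infty} m\left(\frac{x^{m-1}}{(m-1)!}-\frac{x^m}{m!}\right),$$
which is precisely the series that appears in Proposition~\ref{ComplexityBernoulliExph}; splitting it as $\sum_{m\geq 1}m\,x^{m-1}/(m-1)!=(1+x)e^x$ minus $\sum_{m\geq 1}x^m/(m-1)!=x e^x$ yields $e^x$, and setting $x=1/2$ recovers Proposition~\ref{ComplexityBernoulliExph} as a sanity check. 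I do not expect a genuine obstacle: the whole content is that the proof of Proposition~\ref{ComplexityBernoulliExph} is uniform in the parameter $x$, and the only mild care needed is the bookkeeping of the terminating comparison (which still costs one deviate) together with the remark that non-termination has probability zero.
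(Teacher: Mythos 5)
Your proposal is correct and matches the paper's treatment: the paper proves Proposition~\ref{ComplexityBernoulliExph} via the series $\sum_{n\geq 1} n\left(\frac{x^{n-1}}{(n-1)!}-\frac{x^n}{n!}\right)=e^x$ for a generic $x$ and then simply remarks that the argument is uniform in $x\in(0,1)$, which is exactly your line of reasoning. Your extra bookkeeping (the terminating comparison costing one deviate, and almost-sure termination) is a welcome but minor elaboration of the same argument.
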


Since the expected number of uniform deviates used by Algorithm~\ref{BernoulliExph} is a constant, it suffices to consider the expected number of Bernoulli random values from $\mathcal{B}_{1/\sqrt{e}}$ that are required for step 1 and step 2 in Algorithm~\ref{KarneyStdN}.

\begin{proposition}
The expected number of Bernoulli random values from $\mathcal{B}_{1/\sqrt{e}}$ used by step 1 and step 2 for sampling from $\mathcal{D}_{\mathbb{Z}^+,1}$ in Algorithm~\ref{KarneyStdN} is about $4.8265$.\label{ComplexityStd}
\end{proposition}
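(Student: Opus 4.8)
The plan is to view steps~1 and~2 as a single rejection loop and compute the expected number of $\mathcal{B}_{1/\sqrt{e}}$ values it consumes as (expected consumption of one pass through steps~1--2) divided by (probability that a pass is accepted). For the per-pass consumption I first record how the two steps draw Bernoulli values. In step~1 one draws values from $\mathcal{B}_{1/\sqrt{e}}$ until the first \textbf{false}; it lands on the $(k{+}1)$-st draw with probability $(e^{-1/2})^k(1-e^{-1/2})=e^{-k/2}(1-e^{-1/2})$, which is exactly the required selection probability of $k$, and exactly $k+1$ Bernoulli values are used. In step~2, given $k$, one draws from $\mathcal{B}_{1/\sqrt{e}}$ and accepts $k$ iff the first $k(k-1)$ draws are all \textbf{true}; letting $G\ge1$ be the index of the first \textbf{false}, the number of Bernoulli values used is $\min(G,k(k-1))$, with expectation $\sum_{j=1}^{k(k-1)}\Pr[G\ge j]=\sum_{j=1}^{k(k-1)}e^{-(j-1)/2}=\frac{1-e^{-k(k-1)/2}}{1-e^{-1/2}}$.

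Next I would set up the loop accounting. A pass through steps~1--2 selects $k$ and then accepts with overall probability $P_{\mathrm{acc}}=\sum_{k\ge0}e^{-k/2}(1-e^{-1/2})\,e^{-k(k-1)/2}=(1-e^{-1/2})\,\rho_1(\mathbb{Z}^+)$, using $e^{-k/2}e^{-k(k-1)/2}=e^{-k^2/2}$. If $C$ denotes the (random) number of $\mathcal{B}_{1/\sqrt{e}}$ values used in one pass and $X$ the total used before acceptance, then since the event ``at least $i$ passes occur'' depends only on passes $1,\dots,i-1$ and is independent of pass~$i$, we get $E[X]=\sum_{i\ge1}E[\,C\cdot\mathbf{1}(\text{pass }i\text{ occurs})\,]=E[C]\sum_{i\ge1}(1-P_{\mathrm{acc}})^{i-1}=E[C]/P_{\mathrm{acc}}$.

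It then remains to evaluate $E[C]$ by averaging the two contributions over $k$ distributed as $e^{-k/2}(1-e^{-1/2})$:
\[
E[C]=\sum_{k=0}^{\infty}e^{-k/2}(1-e^{-1/2})\!\left[(k+1)+\frac{1-e^{-k(k-1)/2}}{1-e^{-1/2}}\right].
\]
Using $\sum_{k\ge0}e^{-k/2}=\frac{1}{1-e^{-1/2}}$, $\sum_{k\ge0}(k+1)e^{-k/2}=\frac{1}{(1-e^{-1/2})^2}$, and $\sum_{k\ge0}e^{-k/2}e^{-k(k-1)/2}=\sum_{k\ge0}e^{-k^2/2}=\rho_1(\mathbb{Z}^+)$, the sum collapses to $E[C]=\frac{2}{1-e^{-1/2}}-\rho_1(\mathbb{Z}^+)$, whence
\[
E[X]=\frac{E[C]}{P_{\mathrm{acc}}}=\frac{2}{(1-e^{-1/2})^2\,\rho_1(\mathbb{Z}^+)}-\frac{1}{1-e^{-1/2}}.
\]
Plugging in $\rho_1(\mathbb{Z}^+)=1+e^{-1/2}+e^{-2}+e^{-9/2}+\cdots\approx1.7533$ (the remaining tail being negligible) gives $E[X]\approx4.8265$.

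The part I expect to be most delicate is getting the loop bookkeeping exactly right: step~2 runs on every pass, not only on the accepting one; its Bernoulli consumption is the truncated quantity $\min(G,k(k-1))$ and not $k(k-1)$; and the per-pass cost $C$ is correlated with a pass being the last, which is why one should use the decomposition over ``pass $i$ occurs'' rather than a naive product of ``expected number of passes'' and ``expected cost per pass''. Once this is in place the algebraic collapse of $E[C]$ and the numerical evaluation of $\rho_1(\mathbb{Z}^+)$ are routine.
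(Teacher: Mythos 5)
Your proof is correct and reaches the paper's figure by the same overall strategy: compute the expected number of $\mathcal{B}_{1/\sqrt{e}}$ draws consumed by a single pass through steps 1--2, then divide by the per-pass acceptance probability $\sum_{k\ge0}p_1^{k^2}p_0=(1-e^{-1/2})\rho_1(\mathbb{Z}^+)\approx0.689875$. Where you differ is in how the per-pass expectation is evaluated. The paper conditions on the exact outcome of step 2 (acceptance, or rejection at the $j$-th draw), obtaining the sum
$\sum_{k\ge0}(1+k^2)p_1^{k^2}p_0+\sum_{k\ge2}p_1^kp_0\sum_{j=1}^{k(k-1)}(k+1+j)p_1^{j-1}p_0$,
which it evaluates numerically as $3.32967$. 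You instead apply the tail-sum identity $E[\min(G,N)]=\sum_{j=1}^{N}\Pr[G\ge j]$ to the truncated geometric in step 2, which collapses the whole per-pass cost to the closed form $\tfrac{2}{1-e^{-1/2}}-\rho_1(\mathbb{Z}^+)\approx3.3297$ and yields a closed form for the final answer, $\tfrac{2}{(1-e^{-1/2})^2\rho_1(\mathbb{Z}^+)}-\tfrac{1}{1-e^{-1/2}}\approx4.8265$; this agreement also confirms the paper's numerics (and incidentally shows that the first term of the paper's combined display should read $p_1^{k^2}$ rather than $p_1^{k}$). You also justify the division by $P_{\mathrm{acc}}$ via the Wald-type decomposition over the events ``pass $i$ occurs,'' a step the paper asserts without argument; your remark that the per-pass cost is correlated with a pass being the last, so that one cannot naively multiply expected passes by expected cost, is exactly the right point of care. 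In short: same skeleton, but your route buys exact closed forms and a cleaner probabilistic justification, at no loss of rigor.
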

\begin{proof}
Let $p_1=1/\sqrt{e}$ and $p_0=1-p_1$. Assume that $k\geq0$ is an integer which is generated in step 1. Then, the expected number of Bernoulli random values from $\mathcal{B}_{1/\sqrt{e}}$ used by Algorithm~\ref{KarneyStdN} is
$$\sum_{k=0}^\infty(k+1+k(k-1))p_1^kp_0p_1^{k(k-1)}=\sum_{k=0}^\infty(1+k^2)p_1^{k^2}p_0,$$
if $k$ is accepted in step 2, and the expected number of Bernoulli random values is
$$\sum_{k=2}^\infty p_1^kp_0\left(\sum_{j=1}^{k(k-1)}j\cdot p_1^{j-1}\cdot p_0\right),$$
if $k$ is rejected in step 2. In particular, if $k=0$ or $k=1$, then $k$ is directly accepted in step 2. Therefore, the the expected number of Bernoulli random values from $\mathcal{B}_{1/\sqrt{e}}$ used by Algorithm~\ref{KarneyStdN} for executing step 1 and step 2 at a time can be given by
$$\sum_{k=0}^\infty(1+k^2)p_0p_1^k+\sum_{k=2}^\infty p_1^kp_0\left(\sum_{j=1}^{k(k-1)}(k+1+j)\cdot p_1^{j-1}p_0\right)\approx3.32967.$$
Furthermore, it not hard to see that the probability of Algorithm~\ref{KarneyStdN} not going back to step 1 in step 2 is $\sum_{k=0}^\infty p_1^{k^2}p_0\approx0.689875$. So, the expected number of Bernoulli random values from $\mathcal{B}_{1/\sqrt{e}}$ used by Algorithm~\ref{KarneyStdN} for successfully generating an integer from $\mathcal{D}_{\mathbb{Z}^+,1}$ is about $3.32967/0.689875\approx4.82649$.
\end{proof}

\subsection{An Improved Sampling Algorithm for $\mathcal{D}_{\mathbb{Z}^+,1}$}
Karney's algorithm for sampling from $\mathcal{D}_{\mathbb{Z}^+,1}$ can be easily extended to the case of discrete Gaussian distribution $\mathcal{D}_{\mathbb{Z},\sigma}$ with a rational-valued $\sigma>\sqrt{2}/2$. So, we obtain Algorithm~\ref{samplingCentered}.
\begin{algorithm}
\caption{Sampling from $\mathcal{D}_{\mathbb{Z}^+,\sigma}$ with a rational-valued $\sigma>\sqrt{2}/2$}\label{samplingCentered}
\begin{algorithmic}[1]
\ENSURE an integer $k$ that conforms to $\mathcal{D}_{\mathbb{Z}^+,\sigma}$
\STATE select $k\in\mathbb{Z}^{+}$ with probability $\exp(-k/(2\sigma^2))\cdot(1-\exp(-1/(2\sigma^2))$.
\RETURN $k$ with probability $\exp(-\frac{1}{2\sigma^2}k(k-1))$.
\end{algorithmic}
\end{algorithm}

It is not hard to see that Algorithm~\ref{samplingCentered} relies on exactly sampling from the Bernoulli distribution $\mathcal{B}_{\exp(-1/(2\sigma^2))}$. Since $0<1/2\sigma^2<1$, as mentioned in Section 3.1, by replacing $1/2$ with $1/2\sigma^2$ in Algorithm~\ref{BernoulliExph}, one can get a Bernoulli random value which is true with probability $\exp(-1/(2\sigma^2))$. By Corollary~\ref{ComplexityBernoulliExpx}, it uses $\exp(1/(2\sigma^2))$ uniform deviates on average to obtain a Bernoulli value from $\mathcal{B}_{\exp(-1/(2\sigma^2))}$.

Following the proof of Proposition~\ref{ComplexityStd} with $p_1=\exp(-1/(2\sigma^2))$ and $p_0=1-p_1$, one can easily estimate the expected number of Bernoulli random values from $\mathcal{B}_{\exp(-1/(2\sigma^2))}$ used by Algorithm~\ref{samplingCentered}. Here, we give an improved version of Algorithm~\ref{samplingCentered}, namely Algorithm~\ref{samplingCenteredNew}. It uses a smaller number of Bernoulli random values from $\mathcal{B}_{\exp(-1/(2\sigma^2))}$, and thus has lower computational complexity compared to Algorithm~\ref{samplingCentered}.
\begin{algorithm}
\caption{Sampling from $\mathcal{D}_{\mathbb{Z},\sigma}$ with a rational-valued $\sigma>\sqrt{2}/2$}\label{samplingCenteredNew}
\begin{algorithmic}[1]
\ENSURE an integer $k$ that conforms to $\mathcal{D}_{\mathbb{Z},\sigma}$
\STATE sample $b\leftarrow \mathcal{B}_{\exp(-1/(2\sigma^2))}$ and \textbf{return} $0$ \textbf{if} $b$ is false.
\STATE sample $b\leftarrow \mathcal{B}_{\exp(-1/(2\sigma^2))}$ and \textbf{return} $1$ \textbf{if} $b$ is false.
\STATE set $k\leftarrow 2$ and $t\leftarrow 2(k-1)$.
\STATE \textbf{while} $t>0$ \textbf{do}
\STATE \hspace*{12pt} sample $b\leftarrow \mathcal{B}_{\exp(-1/(2\sigma^2))}$.
\STATE \hspace*{12pt} \textbf{if} $b$ is false, then \textbf{goto} step 1, otherwise set $t\leftarrow t-1$.
\STATE sample $b\leftarrow \mathcal{B}_{\exp(-1/(2\sigma^2))}$.
\RETURN $k$ \textbf{if} $b$ is false, otherwise set $k\leftarrow k+1$, $t\leftarrow 2(k-1)$ and \textbf{goto} step 4.
\end{algorithmic}
\end{algorithm}

It is not hard to see that Algorithm~\ref{samplingCenteredNew} is also to select $k\in\mathbb{Z}^{+}$ with probability $\exp(-k/(2\sigma^2))\cdot(1-\exp(-1/(2\sigma^2))$, and then to accept $k$ with probability $\exp(-\frac{1}{2\sigma^2}k(k-1))$. Thus, the returned value of $k$ has the desired (relative) probability density $$\exp(-k/(2\sigma^2))\cdot\exp(-\frac{1}{2\sigma^2}k(k-1))=\exp(-k^2/(2\sigma^2)),$$
which guarantees the correctness of Algorithm~\ref{samplingCenteredNew}. In particular, $b$ is false in step 1 or step 2 is equivalent to $k(k-1)=0$. Thus, Algorithm~\ref{samplingCenteredNew} directly outputs $k=0$ or $k=1$ in these two cases.

The difference between Algorithm~\ref{samplingCentered} and Algorithm~\ref{samplingCenteredNew} is that the latter decides to accept or reject $k$ at once as long as $k\geq2$, while the former determines the final value of $k$ firstly, and then decides to accept or reject it. Specifically, Algorithm~\ref{samplingCentered} needs $(k+1)$ Bernoulli random values from $\mathcal{B}_{\exp(-1/(2\sigma^2))}$ to determines the value of $k$ in step 1. However, the last Bernoulli value will be wasted if $k$ is rejected in step 2. So, the basic idea of Algorithm~\ref{samplingCenteredNew} is to accept or reject $k$ earlier (before the final value of $k$ is determined), and then decides whether or not to return $k$ by using one more Bernoulli random value from $\mathcal{B}_{\exp(-1/(2\sigma^2))}$, or tries to find a larger $k$ if necessary. In other words, a Bernoulli random value will be saved in Algorithm~\ref{samplingCenteredNew} if the selected $k$ is rejected. Therefore, the expected number of uniform deviates used by Algorithm~\ref{samplingCenteredNew} will be smaller as compared to Algorithm~\ref{samplingCentered}. The actual number for a given $\sigma$ can be estimated as follows.

\begin{proposition}
The expected number of Bernoulli random values from $\mathcal{B}_{1/\sqrt{e}}$ used by Algorithm~\ref{samplingCenteredNew} with $\sigma=1$ for sampling from $\mathcal{D}_{\mathbb{Z}^+,1}$ is about $3.684$. \label{ComplexityStdNew}
\end{proposition}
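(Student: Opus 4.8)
The plan is to mirror the structure of the proof of Proposition~\ref{ComplexityStd}: decompose an execution of Algorithm~\ref{samplingCenteredNew} into independent \emph{runs}, where a run is the computation from step~1 up to the moment the algorithm either returns a value or executes ``\textbf{goto} step~1'', and then combine the expected cost of one run with the probability that a run succeeds. Since every run restarts afresh at step~1 and consumes only new Bernoulli values, the runs are i.i.d.\ and the number of Bernoulli values a run uses is independent of the outcomes of all earlier runs; a Wald-type argument (the one implicit in Proposition~\ref{ComplexityStd}) then gives that the expected number of Bernoulli values from $\mathcal{B}_{1/\sqrt e}$ used by the whole algorithm equals $E[B]/q$, where $B$ is the number of such values used in a single run and $q$ is the probability that a run returns a value. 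This reduction is the first thing I would state carefully.

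Write $p_1=1/\sqrt e$ and $p_0=1-p_1$. I would first record that a run returns the integer $k$ with probability exactly $p_1^{k^2}p_0$ for every $k\geq0$. For $k=0,1$ this is immediate from steps~1--2. For $k\geq2$ one tracks the number of consecutive \textbf{true} values required: reaching step~4 with current value $k$ needs $\beta_k=(k-1)^2+1$ true values (using $\sum_{j=2}^{k-1}(2j-1)=k^2-2k$, the empty sum giving $\beta_2=2$), then $2(k-1)$ more true values to pass the \textbf{while} loop, then one \textbf{false} in step~7; the exponents add to $\beta_k+2(k-1)=k^2$. Summing over $k$ gives $q=\sum_{k\geq0}p_1^{k^2}p_0$, the same constant $\approx0.689875$ that appears in Proposition~\ref{ComplexityStd} (and this also re-confirms correctness).

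Next I would compute $E[B]$ by summing over every potential Bernoulli draw in a run the probability that it is actually performed. Step~1 contributes $1$; step~2 contributes $p_1$. For each $k\geq2$, the \textbf{while} loop at stage $k$ is reached with probability $P_k=p_1^{\beta_k}=p_1^{(k-1)^2+1}$, and given that, its number of draws is truncated-geometric with expectation $\sum_{i=1}^{2(k-1)}p_1^{\,i-1}=(1-p_1^{2(k-1)})/p_0$; the step~7 draw at stage $k$ is performed with probability $P_k\,p_1^{2(k-1)}=p_1^{k^2}$. Using $P_k\bigl(1-p_1^{2(k-1)}\bigr)=p_1^{(k-1)^2+1}-p_1^{k^2}$, the double sum telescopes: with $S=\sum_{j\geq0}p_1^{j^2}$ one gets $\sum_{k\geq2}P_k(1-p_1^{2(k-1)})=1-Sp_0$ and $\sum_{k\geq2}p_1^{k^2}=S-1-p_1$, so that $E[B]=1+p_1+\tfrac{1-Sp_0}{p_0}+(S-1-p_1)=1/p_0$. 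Combining this with the first step yields $E[B]/q=\tfrac{1}{p_0\,q}=\bigl((1-1/\sqrt e)\sum_{k\geq0}p_1^{k^2}p_0\bigr)^{-1}$, and substituting $p_0\approx0.393469$, $q\approx0.689875$ gives $\approx3.684$.

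I expect the main obstacle to be the bookkeeping in the $E[B]$ computation: correctly identifying, for every $k$, the probability that each individual draw inside the \textbf{while} loop and in step~7 is executed (in particular handling the truncated-geometric length of the \textbf{while} loop), and then noticing the cancellation that collapses the resulting double sum to the closed form $1/p_0$. Everything else — the per-run bound, the geometric number of runs, and the final numerics — is routine once that setup is in place.
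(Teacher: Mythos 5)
Your proof is correct and follows the same two-stage structure as the paper's: decompose the execution into i.i.d.\ runs, compute the expected number $E[B]$ of Bernoulli values consumed by one run, and divide by the per-run success probability $q=\sum_{k\geq0}p_1^{k^2}p_0\approx0.689875$ via a Wald-type identity. Your bookkeeping of the trues needed to reach each stage ($\beta_k=(k-1)^2+1$, hence return of $k$ with probability $p_1^{k^2}p_0$) matches the paper's exponents $p_1^{k-1}p_1^{(k-1)(k-2)}p_1$ exactly. Where you genuinely diverge is in how $E[B]$ is evaluated: the paper sums cost times probability over all terminal outcomes of a run --- acceptance of $k$ contributing $(1+k^2)p_1^{k^2}p_0$ and rejection at the $j$-th while-loop draw of stage $k$ contributing $(1+(k-1)^2+j)\,p_1^{(k-1)^2+j}p_0$ --- and then reports only the numerical value $2.54149$; you instead sum over the individual draws the probability that each is executed, and the telescoping you identify collapses this to the exact closed form $E[B]=1/p_0$, which indeed equals $2.54149\ldots$ for $p_0=1-e^{-1/2}$. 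That is a modest but real improvement over the paper's argument: it exhibits the answer as exactly $1/(p_0^2\sum_{k\geq0}p_1^{k^2})$ rather than as a ratio of two numerically evaluated series, and as a byproduct it re-derives the acceptance probabilities and hence the correctness of Algorithm~\ref{samplingCenteredNew}. The only item to spell out when writing this up is the reduction $E[\sum_{i\le N}B_i]=E[N]\,E[B]$ itself (the successful run's cost is not distributed like a failed run's, so one should note that $N$ is a stopping time independent of the not-yet-observed run costs); you flag this explicitly, which is already more than the paper does.
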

\begin{proof}
Let $p_1=\exp(-1/(2\sigma^2))=\exp(-1/2)$ and $p_0=1-p_1$. The expected number of Bernoulli random values from $\mathcal{B}_{p_1}$ used by Algorithm~\ref{samplingCenteredNew} for outputting some integer $k\geq0$ is
$$\sum_{k=0}^\infty(1+k+k(k-1))p_1^kp_1^{k(k-1)}p_0=\sum_{k=0}^\infty(1+k^2)p_1^{k^2}p_0.$$
The expected number of Bernoulli random values from $\mathcal{B}_{p_1}$ used by Algorithm~\ref{samplingCenteredNew} for rejecting some integer $k\geq2$ and going back step 1 is
\begin{gather*}
p_1p_1(3\cdot p_0+4\cdot p_1p_0)+\ldots+p_1^2p_1^2p_1(6\cdot p_0+7p_1p_0+8\cdot p_1^2p_0+9\cdot p_1^3p_0)+\ldots\\
=\sum_{k=2}^\infty p_1^{k-1}p_1^{(k-1)(k-2)}p_1\left(\sum_{j=1}^{2(k-1)}(1+(k-1)^2+j)\cdot p_1^{j-1}p_0\right).
\end{gather*}
In particular, if $k=0$ or $k=1$, then $k$ will directly be accepted in step 1 or step 2. Thus, if Algorithm~\ref{samplingCenteredNew} outputs some integer $k$, or if it rejects some integer $k$ before going back to step 1, the expected number of Bernoulli random values from $\mathcal{B}_{p_1}$ used can be given by
$$\sum_{k=0}^\infty(1+k^2)p_1^{k^2}p_0+\sum_{k=2}^\infty p_1^{k-1}p_1^{(k-1)(k-2)}p_1\left(\sum_{j=1}^{2(k-1)}(1+(k-1)^2+j)\cdot p_1^{j-1}p_0\right),$$
which is appropriately equal to $2.54149$. The average number of Algorithm~\ref{samplingCenteredNew} going back to step 1 until it outputs an integer is $1/\sum_{k=0}^\infty p_1^kp_0\approx1/0.689875$. Therefore, the expected number of Bernoulli random values from $\mathcal{B}_{p_1}$ used by Algorithm~\ref{samplingCenteredNew} for successfully generating an integer from $\mathcal{D}_{\mathbb{Z}^+,1}$  is about $2.54149/0.689875\approx3.68399$.
\end{proof}

\section{The Computational Complexity of Sampling from $\mathcal{B}_{\exp\left(-\frac12x(2k+x)\right)}$}
\subsection{Karney's Sampling Algorithm for $\mathcal{B}_{\exp\left(-\frac12x(2k+x)\right)}$}
Algorithm~\ref{BernoulliExpX} involves a random selector $C(m)$ that returns $-1$, $0$ and $1$ with probability $1/m$, $1/m$ and $1-2/m$ respectively, where $m=2k+2$. It is not hard to see that the random selector $C(m)$ can be exactly implemented by sampling two Bernoulli random values from $\mathcal{B}_{2/m}$ and $\mathcal{B}_{1/2}$ respectively. Sampling a Bernoulli random value from $\mathcal{B}_{2/m}$ is equivalent to sampling one uniform deviate $u\in(0,1)$ and deciding whether $2/m>u$ or $2/m<u$. In particular, if $k=0$, i.e., $m=2$, then the random selector $C(m)=C(2)$ only uses a random bit and returns $-1$ or $0$ in this case. Therefore, for simplicity, we can say that a random selector $C(m)$ uses one uniform deviate if $m\geq4$, and in particular it is ``free of charge'' if $m=2$.

\begin{lemma}
For a given $k\geq1$, the probability that Algorithm~\ref{BernoulliExpX} restarts (goes back to step 2) $n$ times can be given by
$$\left(\frac{x}{m}+\frac{2k}{m}\right)^n\left(\frac{x^n}{n!}\right),$$
where $m=2k+2$ and $n$ is a positive integer. \label{ProbBerExp}
\end{lemma}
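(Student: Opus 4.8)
The plan is to follow the control flow of Algorithm~\ref{BernoulliExpX} and isolate precisely which random events must all occur for the loop (steps~2--5) to complete successfully enough times to reach step~5 an $n$-th time, i.e.\ for the algorithm to jump back to step~2 (``restart'') $n$ times. Write $z_1,z_2,\ldots$ for the uniform deviates drawn at step~2 on successive passes, and note that the value of $y$ compared against is $x$ on the first pass and $z_{i-1}$ on the $i$-th pass. Reaching step~5 for the $n$-th time requires, for each $i=1,\ldots,n$: (a) the comparison $z_i<y$ at step~2 succeeds, which over all $i$ together is exactly the event $x>z_1>z_2>\cdots>z_n$; and (b) the $i$-th pass survives steps~3--4. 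I would first observe that these events form a mutually independent family: the step-2 comparisons depend only on the $z_i$, whereas each step-3--4 test consumes its own fresh deviates (the one used inside $C(2k+2)$ and, when $f=0$, the deviate $r$), all independent of one another and of the $z_i$. Hence the required probability factors as $P(x>z_1>\cdots>z_n)$ times the product over $i=1,\ldots,n$ of the step-3--4 survival probabilities.

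It then remains to evaluate the two factors. For the chain factor, since $z_1,\ldots,z_n$ are i.i.d.\ uniform on $(0,1)$, conditioning on the event that all $z_i<x$ (probability $x^n$) leaves them i.i.d.\ uniform on $(0,x)$, so all $n!$ strict orderings are equally likely and exactly one is decreasing; hence $P(x>z_1>\cdots>z_n)=x^n/n!$. (Equivalently, one can just evaluate the iterated integral $\int_0^x\!\int_0^{z_1}\!\cdots\int_0^{z_{n-1}}dz_n\cdots dz_1=x^n/n!$.) For the step-3--4 factor, recall that $C(m)$ with $m=2k+2\ge4$ returns $1$ with probability $1-2/m$, $0$ with probability $1/m$, and $-1$ with probability $1/m$; if $f=1$ the pass proceeds to step~5 (no $r$ is drawn), if $f=0$ it proceeds iff $r<x$ (probability $x$), and if $f=-1$ it fails at step~3. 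Thus a single pass survives steps~3--4 with probability $(1-2/m)+(1/m)\,x=(m-2+x)/m=(2k+x)/m=x/m+2k/m$, and the $n$ such survival events are independent.

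Multiplying the two factors yields $\bigl(\frac{x}{m}+\frac{2k}{m}\bigr)^n\cdot\frac{x^n}{n!}$, which is the claimed expression. The points that need care are: being precise about the semantics of step~4 when $f=1$ (no deviate $r$ is consumed and the pass continues), and---more substantially---justifying the independence factorization, namely that the ``decreasing chain'' event depends solely on the $z_i$ while the $n$ ``selector/threshold'' tests depend solely on pairwise-disjoint blocks of freshly drawn deviates, so that all of these events are mutually independent. Everything else reduces to the elementary order-statistics identity and the case analysis on $C(m)$ recorded above.
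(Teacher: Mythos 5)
Your proof is correct and follows essentially the same route as the paper's: both isolate the per-pass survival probability $(2k+x)/m$ of steps 3--4 and multiply by the decreasing-chain probability $x^n/n!$, the paper organizing this as an induction on $n$ while you justify the product directly via independence of the disjoint blocks of deviates. Your write-up is in fact slightly more careful on the case analysis (the paper's proof contains a typo, writing $f=-1$ where it means $f=0$ in its second case).
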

\begin{proof}
For a given $k\geq1$, there are two cases in which the algorithm goes to step 2: (1) $z<y$ and $f=1$ with probability $x\cdot(2k/m)$; (2) $z<y$, $f=-1$ and $r<x$ with probability $x\cdot(x/m)$. Thus, the probability that the algorithm goes to step 2 is equal to
$$x\frac{x}{m}+x\cdot\frac{2k}{m}=x\left(\frac{x}{m}+\frac{2k}{m}\right).$$
After restarting one time, the probability that the algorithm goes to step 2 once again is equal to
$$\left(x\left(\frac{x}{m}+\frac{2k}{m}\right)\right)\frac{x}{2}\cdot\frac{x}{m}+\left(x\left(\frac{x}{m}+\frac{2k}{m}\right)\right)\frac{x}{2}\cdot\frac{2k}{m}=\frac{x^2}{2}\left(\frac{x}{m}+\frac{2k}{m}\right)^2.$$
Generally, the probability that the algorithm restarts $n$ times ($n\geq1$) can be given by
$$\left(\frac{x}{m}+\frac{2k}{m}\right)^n\left(\frac{x^n}{n!}\right),$$
where ${x^n}/{n!}$ is the probability that a set of uniform deviates $z_1, z_2, \ldots z_n$ over the range $(0,1)$ satisfy $x>z_1>z_2>\ldots>z_n$. The proof can be completed by using the mathematical induction on $n$.
\end{proof}

\begin{proposition}
Let $x\in(0,1)$. For a given $k\geq1$, the expected number of uniform deviates used by Algorithm~\ref{BernoulliExpX} for sampling from $\mathcal{B}_{\exp\left(-\frac12x(2k+x)\right)}$ can be given by
$$\frac{(4k+x+3)\cdot\tau_k(x)-2k-3}{2k+x},$$
where $\tau_k(x)=\exp(x\frac{2k+x}{2k+2})$. \label{ComplexityBerExp}
\end{proposition}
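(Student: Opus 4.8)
The plan is to compute the expectation directly as a sum over the number of restarts $n \ge 0$, weighting each value of $n$ by the probability of exactly $n$ restarts and by the expected number of uniform deviates consumed in those rounds. First I would set up the bookkeeping: by Lemma~\ref{ProbBerExp}, the probability that Algorithm~\ref{BernoulliExpX} executes at least $n$ restarts is $\left(\frac{2k+x}{m}\right)^n \frac{x^n}{n!}$ with $m = 2k+2$; call this $q_n$. Writing $a = \frac{x(2k+x)}{2k+2} = x\,\frac{2k+x}{m}$, we have $q_n = a^n/n!$, so $\sum_{n\ge0} q_n = e^a = \tau_k(x)$, consistent with the acceptance probability stated earlier. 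The probability of exactly $n$ restarts (followed by termination at step~2 or step~4 without looping) is $q_n - q_{n+1} = a^n/n! - a^{n+1}/(n+1)!$.

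Next I would count deviates per round. In each pass through steps 2--5, the algorithm samples one deviate $z$ at step~2 always; at step~3 it invokes the selector $C(m)$, which (per the discussion preceding the proposition) costs one uniform deviate since $k \ge 1$ means $m = 2k+2 \ge 4$; and at step~4 it samples $r$ only when $f = 0$, which happens with probability $1/m$. So the \emph{unconditional} expected number of deviates used in a single pass is $2 + \frac1m$, but I must be careful: a pass that is the final pass may terminate early (at step~2 if $z \ge y$, or at step~3 if $f = -1$) and thus use fewer deviates. The cleanest route is to condition on the event "the algorithm survives to do an $(n{+}1)$-th entry into step~2" — i.e. on $q_n$ — and observe that a pass which is \emph{not} the last always uses exactly $2 + \frac{1}{m}$ deviates in expectation (one $z$, one $C(m)$, and an $r$ with probability $1/m$, since conditioned on continuing we know $f \ne -1$ was \emph{not} the terminating branch... ). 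Here I would rather organize the sum by "expected deviates contributed by round $i$, given the algorithm reaches round $i$," multiply by $q_{i-1}$, and sum over $i \ge 1$; the final partial round's truncation is handled automatically because I only ever charge a deviate for an action that actually occurs, and the probability that action occurs is exactly a ratio of consecutive $q$'s.

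Concretely: round $i$ is reached with probability $q_{i-1}$. Given that, step~2 of round $i$ always fires: $+1$ deviate, contributing $q_{i-1}$. The selector at step~3 fires iff $z < y$, which has conditional probability $x$ (relative to entering round $i$), contributing $q_{i-1}\cdot x$... and then step~4's $r$ fires iff additionally $f = 0$, probability $\frac1m$, contributing $q_{i-1} \cdot x \cdot \frac1m$. Wait — reindexing so that "round $i$ reached" already incorporates that step~3 saw $f \ne -1$... I will instead just use the raw counts: total expected deviates $= \sum_{i\ge1} q_{i-1}\big(1 + (\text{Pr}[z<y \mid \text{round }i] )(1 + \tfrac1m)\big)$, and since $\text{Pr}[z<y \text{ and we proceed}] $ relates $q_i$ to $q_{i-1}$ via $q_i/q_{i-1} = a/i$, I can express everything in terms of $a$ and $m$. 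Summing the resulting series — which will involve $\sum q_n = e^a$, $\sum n q_n$, and $\sum q_n/(\text{something})$ — and then dividing by the overall acceptance-or-rejection normalization is not needed here because the proposition asks for the expected number used by one \emph{invocation} of Algorithm~\ref{BernoulliExpX}, not the amortized cost inside a loop. After collecting terms, substitute $m = 2k+2$, $a = \frac{x(2k+x)}{2k+2}$, $e^a = \tau_k(x)$, and simplify to reach $\frac{(4k+x+3)\tau_k(x) - 2k - 3}{2k+x}$.

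The main obstacle I anticipate is the careful treatment of the \emph{last} (terminating) pass: depending on whether termination happens at step~2 (failure $z \ge y$), step~3 (failure $f = -1$), or step~4 (failure $r \ge x$) versus a "successful continue," a different subset of $\{z, C(m), r\}$ has been sampled, and lumping these correctly into the per-round contribution is where sign/off-by-one errors creep in. I would resolve this by the indicator-linearity bookkeeping described above — charging each of the three possible deviate-draws in pass $i$ its own probability of actually being executed, expressed as a ratio of the $q_n$'s — rather than by a case analysis on how the algorithm halts; linearity of expectation then makes the truncation automatic, and the only real work left is summing three convergent series in $a$ and algebraically massaging the closed form.
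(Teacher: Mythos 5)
Your route is genuinely different from the paper's and, once one slip is repaired, it is the cleaner of the two. The paper conditions on \emph{how} the terminating pass ends (the three cases $z>y$; $f=-1$; $f=0$ and $r>x$), and within each case expands the cost of the preceding $n-1$ passes via the binomial theorem over how many of them took the $f=0,\,r<x$ branch (hence the $(2n-2+i)$ terms and the triple sums (1)--(3)). You instead charge each potential deviate draw in pass $i$ --- the $z$, the $C(m)$, and the $r$ --- its own probability of being executed and invoke linearity of expectation, which makes the truncation of the last pass automatic and collapses the computation to two elementary series: with $a=x\frac{2k+x}{m}$ and $q_n=a^n/n!$ one gets $\sum_i q_{i-1}\bigl(1+P_i(1+\frac1m)\bigr)=e^a+\bigl(1+\frac1m\bigr)\frac{x}{a}(e^a-1)$, which simplifies directly to $\frac{(4k+x+3)\tau_k(x)-2k-3}{2k+x}$. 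The one error you must fix is the claim that $P_i=\Pr[z<y\mid\text{round $i$ reached}]$ equals $x$: reaching round $i$ already conditions on $x>z_1>\cdots>z_{i-1}$, so $\Pr[\text{reach round }i\text{ and }z_i<z_{i-1}]=\frac{x^i}{i!}\left(\frac{2k+x}{m}\right)^{i-1}$ and hence $P_i=x/i$, not $x$ (your own remark that $q_i/q_{i-1}=a/i$ points at this, since $P_i=\frac{x}{a}\cdot\frac{a}{i}$). Using $P_i=x$ would give $e^a\bigl(1+x\frac{m+1}{m}\bigr)$, which is not the stated formula. Likewise your parenthetical that a non-final pass costs $2+\frac1m$ deviates in expectation is off (conditioned on continuing, $r$ was drawn with probability $\frac{x}{2k+x}$, not $\frac1m$), but since you abandon that bookkeeping in favor of the unconditional indicator sum, it does no harm. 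With $P_i=x/i$ in place, your argument is complete and arguably preferable to the paper's case analysis.
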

\begin{proof}
We can see that Algorithm~\ref{BernoulliExpX} needs
$$1+1+\left(\frac{x}{m}\right)\cdot1=2+\left(\frac{x}{m}\right)$$
uniform deviates on average every time it restarts, where the first deviate is for step 2, the second one is used by the random selector $C(m)$, and the third one is possibly required when $f=0$ in Algorithm~\ref{BernoulliExpX}. By Lemma~\ref{ProbBerExp}, the probability of restarting $n-1$ times ($n\geq1$) is
$$\left(\frac{x}{m}+\frac{2k}{m}\right)^{n-1}\left(\frac{x^{n-1}}{(n-1)!}\right).$$
By the binomial theorem, for a given $n\geq1$, Algorithm~\ref{BernoulliExpX} uses
$$\sum_{i=0}^{n-1}\left(n-1\atop i\right)\left(\frac{x}{m}\right)^i\left(\frac{2k}{m}\right)^{n-1-i}(2n-2+i)$$
uniform deviates on average if it restarts $n-1$ times. For a given $k\geq1$, there are three cases in which the algorithm goes to step 6 and returns the result before it goes to step 2: (1) $z>y$ with probability $(1-x)$ at a cost of one new uniform deviate; (2) $f=-1$ with probability $x(1/m)$ at a cost of two new uniform deviates; (3) $f=0$ and $r>x$ with probability $x(1/m)(1-x)$ at a cost of three new uniform deviates. Then, after restarting $n-1$ times ($n\geq1$), there are three cases in which the algorithm goes to step 6 and returns the result before it goes back to step 2 once again. \\
\noindent(1) $z>y$ with probability
$$\left(\frac{x}{m}+\frac{2k}{m}\right)^{n-1}\left(\frac{x^{n-1}}{(n-1)!}-\frac{x^n}{n!}\right),$$
and at a cost of
\begin{equation}
\sum_{n=1}^{\infty}\sum_{i=0}^{n-1}\left(n-1\atop i\right)\left(\frac{x}{m}\right)^i\left(\frac{2k}{m}\right)^{n-1-i}(2n-1+i)
\left(\frac{x^{n-1}}{(n-1)!}-\frac{x^n}{n!}\right)
\end{equation}
uniform deviates, where ${x^{n-1}}/{(n-1)!}-{x^n}/{n!}$ is the probability that the length of the longest decreasing sequence $x>z_1>z_2>\ldots>z_n$ is $n$.\\
(2) $f=-1$ with probability
$$\left(\frac{x}{m}+\frac{2k}{m}\right)^{n-1}\left(\frac{x^n}{n!}\right)\left(\frac{1}{m}\right),$$
and at a cost of
\begin{equation}
\sum_{n=1}^{\infty}\sum_{i=0}^{n-1}\left(n-1\atop i\right)\left(\frac{x}{m}\right)^i\left(\frac{2k}{m}\right)^{n-1-i}(2n+i)
\left(\frac{x^n}{n!}\right)\left(\frac{1}{m}\right)
\end{equation}
uniform deviates.\\
(3) $f=0$ and $r>x$ with probability
$$\left(\frac{x}{m}+\frac{2k}{m}\right)^{n-1}\left(\frac{x^n}{n!}\right)\left(\frac{1}{m}\right)(1-x),$$
and at a cost of
\begin{equation}
\sum_{n=1}^{\infty}\sum_{i=0}^{n-1}\left(n-1\atop i\right)\left(\frac{x}{m}\right)^i\left(\frac{2k}{m}\right)^{n-1-i}(2n+1+i)
\left(\frac{x^n}{n!}\right)\left(\frac{1}{m}\right)(1-x)
\end{equation}
uniform deviates. Therefore, for a given $k\geq1$, by combining the above three cases, i.e., summing Equations (1)-(3), we can obtain the expected number of uniform deviates used by Algorithm~\ref{BernoulliExpX}, which can be reduced to a function of $x$:
$$\frac{(4k+x+3)\cdot\tau_k(x)-2k-3}{2k+x},$$
where $\tau_k(x)=\exp(x\frac{2k+x}{2k+2})$.
\end{proof}

When $k=0$, since the random selector $C(2k+2)=C(2)$ returns $-1$ or $0$ using only one random bit, Karney switches the order of step 2 and step 3 in Algorithm~\ref{BernoulliExpX} to save the uniform deviate consumption. In this case, the probability that Algorithm~\ref{BernoulliExpX} restarts $n$ times is
$$\left(\frac{1}{2}\right)^n\left(\frac{x^n}{n!}\right)x^n,$$
where $x^n$ is the probability that a set of uniform deviates $r_1, r_2, \ldots r_n$ over the range $(0,1)$ satisfy $r_i<x$ for $1\leq i \leq n$. Following the idea of Proposition~\ref{ComplexityBerExp} we can also give an estimate for Algorithm~\ref{BernoulliExpX} with $k=0$.
\begin{proposition}
Let $x\in(0,1)$. when $k=0$, the expected number of uniform deviates used by Algorithm~\ref{BernoulliExpX} (with switching the order of step 2 and step 3) for sampling from $\mathcal{B}_{\exp\left(-\frac12x(2k+x)\right)}=\mathcal{B}_{\exp\left(-\frac{x^2}{2}\right)}$ can be given by
$$\frac{(x+2)\cdot\tau_0(x)-2}{2x},$$
where $\tau_0(x)=\exp(\frac{x^2}{2})$. \label{ComplexityBerExp0}
\end{proposition}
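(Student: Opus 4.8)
The plan is to follow the structure of the proof of Proposition~\ref{ComplexityBerExp}, but adapted to the features that appear only when $k=0$: the selector $C(2k+2)=C(2)$ returns $-1$ or $0$ with probability $1/2$ each using a single random bit, hence it is free of charge, and Karney places it \emph{before} the deviate $z$ in the loop body (and never returns $1$, since $C(2)$ does so with probability $0$). First I would describe one pass through the loop. Entering iteration $j$ (which is reached after $j-1$ restarts, with threshold $y=x$ when $j=1$ and $y=z_{j-1}$ otherwise): draw $f_j\in\{-1,0\}$ at no cost; if $f_j=-1$, jump to step~6 at a cost of $0$; if $f_j=0$, draw $z_j$ (one deviate) and jump to step~6 unless $z_j<y$; if $z_j<y$, draw $r_j$ (one more deviate) and jump to step~6 unless $r_j<x$; otherwise continue to iteration $j+1$. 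In particular, every iteration that leads to a restart costs exactly two deviates, so the $n-1$ restarts preceding a stop in iteration $n$ contribute $2(n-1)$ deviates.

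Next I would enumerate, for each $n\geq1$, the three disjoint events under which the algorithm stops in iteration $n$, recording the probability and the deviate cost of each. As noted just before the statement, when $k=0$ the event ``the first $n-1$ iterations all continue'' has probability $(1/2)^{n-1}(x^{n-1}/(n-1)!)\,x^{n-1}$. Conditioning on this event, the three stopping cases in iteration $n$ are: (i) $f_n=-1$, extra factor $1/2$, cost $2(n-1)$; (ii) $f_n=0$ and $z_n$ fails to extend the decreasing run, where the probability that the run has length exactly $n-1$ is the von Neumann difference $x^{n-1}/(n-1)!-x^n/n!$ (which also yields the correct value $1-x$ at $n=1$), cost $2(n-1)+1$; (iii) $f_n=0$, $z_n$ extends the run (so the $z$-chain carries the joint probability $x^n/n!$) but $r_n\geq x$ (factor $1-x$), cost $2(n-1)+2$. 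This gives the three probability weights $(1/2)^n(x^{2(n-1)}/(n-1)!)$, $(1/2)^n(x^{n-1}/(n-1)!-x^n/n!)\,x^{n-1}$, and $(1/2)^n(x^n/n!)\,x^{n-1}(1-x)$; as a consistency check I would verify that their sum over $n\geq1$ telescopes to $e^{x^2/2}-(e^{x^2/2}-1)=1$.

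Finally I would form the expected cost $E=\sum_{n\geq1}\bigl[w_{\mathrm{i}}\cdot2(n-1)+w_{\mathrm{ii}}\cdot(2(n-1)+1)+w_{\mathrm{iii}}\cdot(2(n-1)+2)\bigr]$, where $w_{\mathrm{i}},w_{\mathrm{ii}},w_{\mathrm{iii}}$ are the three weights above, and evaluate the resulting series. Writing $a=x^2/2$, every sum that occurs is $\sum_{m\geq0}a^m/m!=e^a=\tau_0(x)$ or $\sum_{m\geq0}m\,a^m/m!=a e^a=(x^2/2)\tau_0(x)$, up to an index shift or a stray factor $1/x$; the $a e^a$ contributions cancel, leaving $E=\tau_0(x)/2+(\tau_0(x)-1)/x$, which rearranges to $((x+2)\tau_0(x)-2)/(2x)$. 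The part I expect to be the main obstacle is the bookkeeping in the enumeration step — keeping straight exactly which deviates have been consumed at the instant of stopping (because of the switched order, the ``$f_n=-1$'' exit costs $0$ rather than $1$), and using the identity $x^{n-1}/(n-1)!-x^n/n!$ correctly for the ``prescribed run length'' event; once the three probability-cost pairs are written down correctly, the remaining series manipulation is routine.
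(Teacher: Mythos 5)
Your proposal is correct and follows essentially the same route as the paper's proof: the same three-case decomposition of the stopping event after $n-1$ restarts, with identical probability weights $(1/2)^n x^{2(n-1)}/(n-1)!$, $(1/2)^n x^{n-1}\bigl(x^{n-1}/(n-1)!-x^n/n!\bigr)$, $(1/2)^n x^{n-1}(x^n/n!)(1-x)$ and identical costs $2n-2$, $2n-1$, $2n$, summed to the same closed form. The only additions are your explicit consistency check that the weights sum to $1$ and the remark that the $a e^a$ terms cancel, both of which check out.
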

\begin{proof}
If $k=0$, after restarting $n-1$ times ($n\geq1$), there are three cases in which Algorithm~\ref{BernoulliExpX} goes to step 6 and returns the result before it goes back to step 2 again.\\
\noindent(1) $f=-1$ with probability
$$\left(\frac{1}{2}\right)^n\left(\frac{x^{n-1}}{(n-1)!}\right)x^{n-1},$$
and at a cost of
\begin{equation}
\sum_{n=1}^{\infty}\left(\frac{1}{2}\right)^n\left(\frac{x^{n-1}}{(n-1)!}\right)x^{n-1}(2n-2)
\end{equation}
uniform deviates.\\
(2) $z>y$ with probability
$$\left(\frac{1}{2}\right)^nx^{n-1}\left(\frac{x^{n-1}}{(n-1)!}-\frac{x^n}{n!}\right),$$
and at a cost of
\begin{equation}
\sum_{n=1}^{\infty}\left(\frac{1}{2}\right)^nx^{n-1}\left(\frac{x^{n-1}}{(n-1)!}-\frac{x^n}{n!}\right)(2n-1)
\end{equation}
uniform deviates.\\
(3) $f=0$ and $r>x$ with probability
$$\left(\frac{1}{2}\right)^nx^{n-1}\left(\frac{x^n}{n!}\right)(1-x),$$
and at a cost of
\begin{equation}
\sum_{n=1}^{\infty}\left(\frac{1}{2}\right)^nx^{n-1}\left(\frac{x^n}{n!}\right)(1-x)(2n)
\end{equation}
uniform deviates. Therefore, when $k=0$, by summing Equations (4)-(6), we have the expected number of uniform deviates used by Algorithm~\ref{BernoulliExpX}, which can be reduced to a function of $x$:
$$\frac{(x+2)\cdot\tau_0(x)-2}{2x},$$
where $\tau_0(x)=\exp(\frac{x^2}{2})$.
\end{proof}

Let's go back to Algorithm~\ref{KarneyStdN}. For a given $k\geq0$, the average number of step 4 invoking Algorithm~\ref{BernoulliExpX}, denoted by $\mathbf{t}_k(x)$, is equal to
$$\sum_{i=1}^{k}i\left(\exp\left(-x\frac{2k+x}{2k+2}\right)\right)^{i-1}\left(1-\exp\left(-x\frac{2k+x}{2k+2}\right)\right)
+(k+1)\left(\exp\left(-x\frac{2k+x}{2k+2}\right)\right)^k.$$
In particular, when $k=0$, we have $\mathbf{t}_0(x)=1$. The value of $k$ conforms to the discrete Gaussian distribution $\mathcal{D}_{\mathbb{Z}^+,1}$, and $x$ is a uniformly random number over the range $(0,1)$. Then, by Proposition~\ref{ComplexityBerExp} and Proposition~\ref{ComplexityBerExp0}, the expected number of uniform deviates used by Algorithm~\ref{KarneyStdN} for sampling from $\mathcal{B}_{\exp\left(-\frac12x(2k+x)\right)}$ in step 4 can be given by
$$\mathcal{D}_{\mathbb{Z}^+,1}(0)\int_{0}^{1}\frac{(x+2)\cdot\tau_0(x)-2}{2x}dx+
\sum_{k=1}^{\infty}\mathcal{D}_{\mathbb{Z}^+,1}(k)\left(\int_{0}^{1}\frac{(4k+x+3)\cdot\tau_k(x)-2k-3}{2k+x}\mathbf{t}_k(x)dx\right),$$
where $\tau_0(x)$ and $\tau_k(x)$ with $k\geq1$ are defined as in Proposition~\ref{ComplexityBerExp0} and Proposition~\ref{ComplexityBerExp} respectively. By performing numerical calculation, we can verify that the actual value calculated by this formula is about $2.19414$.

\subsection{An Alternative Sampling Algorithm for $\mathcal{B}_{\exp\left(-\frac12x(2k+x)\right)}$}
In this subsection, we show that one may not need to use Algorithm~\ref{BernoulliExpX} to sample from the Bernoulli distribution $\mathcal{B}_{\exp\left(-\frac12x(2k+x)\right)}$ with an integer $k\geq0$ and a uniformly random number $x\in(0,1)$. Instead, sampling from $\mathcal{B}_{\exp\left(-\frac12x(2k+x)\right)}$ can be simply decomposed into two sampling procedures: sampling from $\mathcal{B}_{e^{-kx}}$ and from $\mathcal{B}_{e^{-x^2/2}}$ respectively
since $\exp\left(-\frac12x(2k+x)\right)=e^{-kx}e^{-x^2/2}$.

It is clear that we can repeatedly use the generalized version of Algorithm~\ref{BernoulliExph} (replacing $1/2$ with $x$) at most $k$ times to sample from $\mathcal{B}_{e^{-kx}}$. When sampling from $\mathcal{B}_{e^{-x^2/2}}$, however, it is hard for us to directly compare a uniform deviate with the value of $x^2/2$ for a given real number $x$. So, we use the following algorithm, namely Algorithm~\ref{BernoulliExpxy}, to address this problem.

\begin{algorithm}
\caption{Generating a Bernoulli random value which is true with probability $e^{-xy}$}\label{BernoulliExpxy}
\begin{algorithmic}[1]
\REQUIRE $x,y\in(0,1)$
\ENSURE a Bernoulli random value from $\mathcal{B}_{e^{-xy}}$
\STATE set $w\leftarrow x$, $n\leftarrow 0$.
\STATE sample a uniform deviate $u\in(0,1)$, and \textbf{goto} step 5 unless $u<w$.
\STATE sample a uniform deviate $v\in(0,1)$, and \textbf{goto} step 5 unless $v<y$.
\STATE set $w\leftarrow u$, $n\leftarrow n+1$ and \textbf{goto} step 2.
\STATE \textbf{return} \textbf{true} if $n$ is even, otherwise \textbf{return} \textbf{false}.
\end{algorithmic}
\end{algorithm}

In fact, Algorithm~\ref{BernoulliExpxy} can be viewed as a special version of Algorithm~\ref{BernoulliExpX}, as the basic idea behind it follows from Algorithm~\ref{BernoulliExpX}. It samples two sets of uniform deviates $u_1, u_2, \ldots$ and $v_1, v_2, \ldots$ over the range $(0,1)$, and then determines the maximum value $n\geq 0$ such that $x>u_1>u_2>\ldots>u_n$ and $v_i<y$. If $n$ is even, it returns \textbf{true}, and the probability is exactly equal to
$$\left(1-xy\right)+\left(\frac{x^2}{2!}y^2-\frac{x^3}{3!}y^3\right)+\ldots=\sum_{n=0}^\infty\left(\frac{x^{n}}{n!}y^{n}-\frac{x^{n+1}}{(n+1)!}y^{n+1}\right)=e^{-xy}.$$

\begin{proposition}
Let $x, y\in(0,1)$. The expected number of uniform deviates used by Algorithm~\ref{BernoulliExpxy} for sampling from $\mathcal{B}_{e^{-xy}}$ can be given by $(e^{xy}(1+y)-1)/{y}$.
\label{ComplexityExpxy}.
\end{proposition}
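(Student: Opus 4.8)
The plan is to mirror the case analysis used in the proofs of Proposition~\ref{ComplexityBerExp} and Proposition~\ref{ComplexityBerExp0}: enumerate the ways Algorithm~\ref{BernoulliExpxy} can terminate, attach to each the exact number of uniform deviates consumed, weight by the corresponding probability, and sum. Each iteration of the loop (steps 2--4) samples one deviate $u$ at step 2 and, only when $u<w$, one further deviate $v$ at step 3. Thus the process builds a strictly decreasing run $x>u_1>u_2>\ldots$ together with independent coins $v_i<y$, and the counter reaches the value $m$ exactly when $x>u_1>\ldots>u_m$ and $v_1,\ldots,v_m<y$ all hold, at which point $2m$ deviates have been spent. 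Conditioned on reaching counter value $m$, the next iteration terminates in one of two ways: (A) $u_{m+1}\geq u_m$, failing the test at step 2 after one more deviate; or (B) $u_{m+1}<u_m$ but $v_{m+1}\geq y$, failing at step 3 after two more deviates.

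First I would record the two probabilities. Using the standard fact that $\Pr[x>z_1>\ldots>z_j]=x^j/j!$ for i.i.d.\ uniforms on $(0,1)$, the event in case (A) with final counter $m$ has probability $y^m\bigl(x^m/m!-x^{m+1}/(m+1)!\bigr)$ and costs $2m+1$ deviates, while the event in case (B) with final counter $m$ has probability $y^m\bigl(x^{m+1}/(m+1)!\bigr)(1-y)$ and costs $2m+2$ deviates. As a sanity check one verifies the two families of probabilities sum to $1$ (the sum telescopes to $\sum_{m\geq0}\bigl((xy)^m/m!-(xy)^{m+1}/(m+1)!\bigr)=1$), confirming that Algorithm~\ref{BernoulliExpxy} terminates with probability one.

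Then I would assemble the expected number of uniform deviates as the double sum
\[
\sum_{m=0}^{\infty}(2m+1)\,y^m\!\left(\frac{x^m}{m!}-\frac{x^{m+1}}{(m+1)!}\right)+\sum_{m=0}^{\infty}(2m+2)\,y^m\frac{x^{m+1}}{(m+1)!}(1-y),
\]
split each summand into pieces of the form $(xy)^m/m!$ and $(xy)^{m+1}/((m+1)!\,y)$ carrying coefficients linear in $m$, reindex the second type of sum by $m\mapsto m+1$, and evaluate everything with the two identities $\sum_{m\geq0}(xy)^m/m!=e^{xy}$ and $\sum_{m\geq0}m(xy)^m/m!=xy\,e^{xy}$. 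Collecting terms, the $xy\,e^{xy}$ contributions cancel and one is left with $e^{xy}+(e^{xy}-1)/y=(e^{xy}(1+y)-1)/y$, as claimed.

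The only delicate point is bookkeeping: getting the deviate counts $2m+1$ versus $2m+2$ right (in particular that the terminal iteration costs exactly one extra deviate in case (A) and exactly two in case (B)) and keeping the index shifts straight when re-summing $\sum_m(2m+2)\,y^m x^{m+1}/(m+1)!$. A cleaner alternative, should one prefer to avoid the series manipulation, is to let $f(w)$ denote the expected number of deviates when the loop is entered with threshold $w$; conditioning on the first iteration gives $f(w)=1+w+y\int_0^w f(u)\,du$, hence $f'(w)=1+yf(w)$ with $f(0)=1$, so $f(w)=\bigl((1+y)e^{yw}-1\bigr)/y$ and the answer is $f(x)$. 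I would present the direct-sum version to match the style of the surrounding propositions and mention the differential-equation shortcut as a remark.
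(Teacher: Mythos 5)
Your proposal is correct and follows essentially the same route as the paper's proof: the same decomposition by the number of restarts (your index $m$ is the paper's $n-1$), the same two termination events with probabilities $y^m\bigl(x^m/m!-x^{m+1}/(m+1)!\bigr)$ and $y^m\bigl(x^{m+1}/(m+1)!\bigr)(1-y)$, the same deviate counts $2m+1$ and $2m+2$, and the same series evaluation; your telescoping sanity check and the differential-equation remark are pleasant extras not present in the paper. One small point in your favor: your displayed sum correctly retains the factor $\bigl(x^m/m!-x^{m+1}/(m+1)!\bigr)$ in the first case, whereas the paper's final display drops the subtracted term (evidently a typo, since only the version with the subtraction reduces to $(e^{xy}(1+y)-1)/y$).
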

\begin{proof}
For a given $n\geq1$, the probability that Algorithm~\ref{BernoulliExpxy} restarts (goes back to step 2) $(n-1)$ times is
$$\frac{x^{n-1}}{(n-1)!}y^{n-1}.$$
Then, after restarting $(n-1)$ times, there are two cases in which Algorithm~\ref{BernoulliExpxy} goes to step 5 and returns the result before it goes back to step 2 again.\\
\noindent(1) $u>w$ with probability
$$\left(\frac{x^{n-1}}{(n-1)!}-\frac{x^{n}}{n!}\right)y^{n-1}.$$
(2) $u<w$ and $v>y$ with probability
$$\left(\frac{x^{n}}{n!}\right)y^{n-1}(1-y).$$
Algorithm~\ref{BernoulliExpxy} uses $(2n-1)$ uniform deviates in the first case, while it uses $2n$ uniform deviates in the second case. Therefore, the expected number of uniform deviates can be given by
$$\sum_{n=1}^{\infty}\left(\frac{x^{n-1}}{(n-1)!}y^{n-1}(2n-1)+\left(\frac{x^{n}}{n!}\right)y^{n-1}(1-y)(2n)\right),$$
which can be reduced to $(e^{xy}(1+y)-1)/{y}$.
\end{proof}

For a given $k\geq0$ and a uniformly random number $x\in(0,1)$, applying Proposition~\ref{ComplexityExpxy} gives the expected number of uniform deviates used by Algorithm~\ref{BernoulliExpxy} for sampling from $\mathcal{B}_{e^{-x^2/2}}$:
$$\int_{0}^{1}\frac{(1+x)e^{\frac{x^2}{2}}-1}{x}dx.$$
Here, we set $y\leftarrow x$ and $x\leftarrow x/2$ in Algorithm~\ref{BernoulliExpxy}, since we have the fact that
$$\frac{e^{xy}(1+y)-1}{y}<\frac{e^{xy}(1+x)-1}{x}$$
if and only if $x<y$.

\subsection{An Improved Sampling Algorithm for Standard Normal Distribution}
Combining the idea presented in Section 4.2, namely Algorithm~\ref{BernoulliExpxy}, we give Algorithm~\ref{ImprovedKarneyStdN}, which is to sample from the standard normal distribution $\mathcal{N}(0,1)$.

\begin{algorithm}
\caption{Sampling from the standard normal distribution $\mathcal{N}(0,1)$}\label{ImprovedKarneyStdN}
\begin{algorithmic}[1]
\ENSURE a sample value that conforms to $\mathcal{N}(0,1)$
\STATE sample $k\in\mathbb{Z}^{+}$ from $\mathcal{D}_{\mathbb{Z}^+,1}$ either \\
(a) by using steps 1 and 2 from Algorithm~\ref{KarneyStdN} or (b) by applying Algorithm~\ref{samplingCenteredNew} 
\STATE sample a uniformly random number $x\in(0,1)$.
\STATE sample a Boolean random values from $\mathcal{B}_{e^{-kx}}$, and \textbf{goto} step 1 if it is \textbf{false}.
\STATE sample a Boolean random values from $\mathcal{B}_{e^{-x^2/2}}$, and \textbf{goto} step 1 if it is \textbf{false}.
\STATE set $s\leftarrow \pm1$ with equal probabilities and \textbf{return} $s(k+x)$.
\end{algorithmic}
\end{algorithm}

The correctness of Algorithm~\ref{ImprovedKarneyStdN} follows from Karney's algorithm for sampling from standard normal distribution. According to Proposition~\ref{ComplexityStd} and Proposition~\ref{ComplexityStdNew}, it is better to use Algorithm~\ref{samplingCenteredNew} with $\sigma=1$ to sample $k$ from $\mathcal{D}_{\mathbb{Z}^+,1}$, since Algorithm~\ref{samplingCenteredNew} uses a smaller number of uniform deviates on average. In this subsection, for sampling from $\mathcal{B}_{\exp\left(-\frac12x(2k+x)\right)}$, we show that the expected number of uniform deviates used by step 3 and step 4 in Algorithm~\ref{ImprovedKarneyStdN} is also slightly smaller as compared to using Algorithm~\ref{BernoulliExpX}.

When sampling from $\mathcal{B}_{e^{-kx}}$ in step 3, the average number of invoking the generalized version of Algorithm~\ref{BernoulliExph}, denoted by $\mathbf{t}_k(x)$, is equal to
$$\sum_{i=1}^{k-1}i\left(e^{-x}\right)^{i-1}\left(1-e^{-x}\right)+k\left(e^{-x}\right)^{k-1}.$$
In particular, when $k=0$, we have $\mathbf{t}_0(x)=0$.

The value of $k$ conforms to the discrete Gaussian distribution $\mathcal{D}_{\mathbb{Z}^+,1}$, and $x$ is a uniformly random number from $(0,1)$. Furthermore, for a given $k\geq0$, the algorithm executes step 4 with probability $(e^{-x})^k$. Therefore, according to Corollary~\ref{ComplexityBernoulliExpx} and Proposition~\ref{ComplexityExpxy}, the expected number of uniform deviates used by Algorithm~\ref{ImprovedKarneyStdN} for sampling from $\mathcal{B}_{\exp\left(-\frac12x(2k+x)\right)}$ in step 3 and step 4 can be given by
$$\sum_{k=1}^{\infty}\mathcal{D}_{\mathbb{Z}^+,1}(k)\left(\int_{0}^{1}\mathbf{t}_k(x)e^{x}dx\right)
+\sum_{k=0}^{\infty}\mathcal{D}_{\mathbb{Z}^+,1}(k)\left(\int_{0}^{1}\frac{(1+x)e^{\frac{x^2}{2}}-1}{x}(e^{-x})^kdx\right)\approx2.01799,$$
which is slightly smaller than $2.19414$, i.e., the expected number of uniform deviates used by Algorithm~\ref{KarneyStdN} for sampling from $\mathcal{B}_{\exp\left(-\frac12x(2k+x)\right)}$.

\section{Experimental Results}
On a laptop computer (Intel i7-8550U, 16GB RAM, the g++ compiler and enabling -O3 optimization option), and following the implementation of Algorithm~\ref{KarneyStdN} in Karney's C++ library RandomLib, we implemented our proposed algorithms. The implementation of Algorithm~\ref{samplingCenteredNew} and Algorithm~\ref{ImprovedKarneyStdN} is simply based on the adaption of ExactNormal.hpp as well as the runtime environment provided by RandomLib \footnote{`RandomLib' is available at http://randomlib.sourceforge.net/.}. 

We tested the average numbers of Bernoulli random values consumed by Algorithm~\ref{KarneyStdN} and Algorithm~\ref{ImprovedKarneyStdN} for outputting an integer from $\mathcal{D}_{\mathbb{Z}^+,1}$. The average quantities measured in practice are consistent with the expected values estimated in Proposition \ref{ComplexityStd} and Proposition \ref{ComplexityStdNew}. We also verified that the average numbers of uniform deviates used by Algorithm~\ref{KarneyStdN} and Algorithm~\ref{ImprovedKarneyStdN} in practice for sampling from $\mathcal{B}_{\exp\left(-\frac12x(2k+x)\right)}$ is about $2.19$ and $2.02$ respectively.

A uniform deviate could be made up of one or several digits, and it will only be treated as one deviate no matter how many digits it actually uses. The comparison of two deviates is realized digit-by-digit, and each digit of a deviate is generated online according to the actual needs. In fact, one digit could consist of only one bit or a small number of bits, such as $4$ bits, $8$ bits or $16$ bits. We call the number of bits in each digit the \emph{digit size}, which can be specified through assigning an integer value to the variable \texttt{bits} in the source code. If uniformly random bits can be produced in batches efficiently (say a typical software approach), then a larger \texttt{bits} could bring better actual sampling performance.

\begin{table}[bhtp]
\centering
\caption{The performance of Algorithm~\ref{KarneyStdN} and Algorithm~\ref{ImprovedKarneyStdN} ($10^6$ sample values per second)}\label{comparison}
\begin{tabular}{lcccc}
\hline
                                            & \texttt{bits}$=1$ & \texttt{bits}$=4$ & \texttt{bits}$=8$ & \texttt{bits}$=16$ \\
\hline
Alg.~\ref{KarneyStdN} \cite{Karney2016}     & $2.109$ & $3.248$ & $3.470$ & $3.502$  \\
Alg.~\ref{ImprovedKarneyStdN}(a)            & $2.159$ & $3.298$ & $3.503$ & $3.561$  \\
Alg.~\ref{ImprovedKarneyStdN}(b)            & $2.532$ & $3.929$ & $4.017$ & $4.135$  \\
\hline
\end{tabular}
\end{table}

Using Algorithm~\ref{samplingCenteredNew} with $\sigma=1$, one can get about $26.5\times10^6$ sample values per second from the discrete Gaussian distribution $\mathcal{D}_{\mathbb{Z},1}$, while using Algorithm~\ref{samplingCentered} with $\sigma=1$ one can obtain only about $21.5\times10^6$ sample values per second from $\mathcal{D}_{\mathbb{Z}^+,1}$. Table~\ref{comparison} shows the performance of Algorithm~\ref{KarneyStdN} and Algorithm~\ref{ImprovedKarneyStdN}. One can see that the performance advantage of Algorithm~\ref{ImprovedKarneyStdN} can be shown even without using Algorithm~\ref{samplingCenteredNew}. This could be well explained by the fact that the expected number of uniform deviates used by Algorithm~\ref{ImprovedKarneyStdN} for sampling from $\mathcal{B}_{\exp\left(-\frac12x(2k+x)\right)}$ is slightly smaller as compared to using Algorithm~\ref{BernoulliExpX} in Algorithm~\ref{KarneyStdN}.

\section{Conclusion and Future Work}
In this paper, under the random deviate model, we discuss the computational complexity of Karney's exact sampling algorithm for the standard normal distribution. We present an improved algorithm for sampling $\mathcal{D}_{\mathbb{Z}^+,1}$, and an alternative sampling method for the Bernoulli distribution $\mathcal{B}_{\exp\left(-\frac12x(2k+x)\right)}$ with an integer $k\in\mathbb{Z}^+$ and a real number $x\in(0,1)$. The expected number of uniform deviates they used is smaller as compared to Karney's algorithms. Then, we can obtain an improved exact sampling algorithm with lower uniform deviate consumption for the standard normal distribution.

Furthermore, a discrete Gaussian distribution can also be defined over the set of all the integers $\mathbb{Z}$, namely $\mathcal{D}_{\mathbb{Z},\sigma,\mu}(k)={\rho_{\sigma,\mu}(k)}/{\rho_{\sigma,\mu}(\mathbb{Z})}$ for all $k\in\mathbb{Z}$, where $\rho_{\sigma,\mu}(\mathbb{Z})=\sum_{x\in\mathbb{Z}}\rho_{\sigma,\mu}(x)$. In recent years, the issue of sampling from discrete Gaussian distributions over the integers has received increasing attention because of its application in cryptography \cite{Gentry2008,Dwarakanath2014,Micciancio2017,zhao2020}. The methods of sampling from a continuous Gaussian distribution are not trivially applicable for the discrete case. As a discretization version of the algorithm for sampling exactly from the standard normal distribution, Karney also presented an exact sampling algorithm for the discrete Gaussian distribution over the integers $\mathbb{Z}$, whose parameters ($\sigma$ and $\mu$) are rational numbers. (see Algorithm D in \cite{Karney2016}.) Sampling from $\mathcal{D}_{\mathbb{Z}^+,1}$ and sampling from $\mathcal{B}_{\exp\left(-\frac12x(2k+x)\right)}$ are also two key subroutines in Karney's sampling algorithm for discrete Gaussian distributions. The computational complexity and the actual performance of this algorithm could also be improved by our work in this paper.

Our complexity analysis of Karney's algorithms is based on estimating expected number of uniform deviates used, namely the random deviate model, which is still a ``coarse-grained'' complexity model as compared to the random bit model, although it can be used to well explain the complexity advantage of our improved algorithms. It would be also interesting to investigate Karney's algorithms under the random bit model, in which the complexity is measured by the expected number of random bits used by the sampling algorithm. In our opinion, following the argument presented in this work, the complexity of Karney's algorithms under the random bit model could be well estimated if a theoretical estimate of the expected number of random bits used by Algorithm~\ref{BernoulliExph} as well as its generalized version can be given.

\bibliographystyle{plain}
\bibliography{mybibfile}

\end{document}